\newcolumntype{x}[1]{%
>{\centering\arraybackslash}p{#1}}%
\newtheorem{lemma}{Lemma}
\newtheorem{proposition}{Proposition}
\newtheorem{theorem}{Theorem}
\theoremstyle{definition}
\newtheorem{definition}{Definition}
\newcommand{\bra}[1]{\langle #1|}
\newcommand{\ket}[1]{|#1\rangle}
\newcommand{\ip}[2]{\langle #1|#2\rangle}
\newcommand{\op}[2]{|#1\rangle \langle #2|}
\newcommand{\mc}[1]{\mathcal{#1}}
\newcommand{\mbf}[1]{\mathbf{#1}}
\newcommand{\mbb}[1]{\mathbb{#1}}
\newcommand{\supp}{\text{\upshape supp}}
\newcommand{\LU}{\mathbf{L}\mathbf{U}}
\newcommand{\Swap}{\mathbf{S}\mathbf{W}\mathbf{A}\mathbf{P}}
\newcommand{\CUU}{\mathbf{U}_{\mbf{C}2}}
\newcommand{\LUeq}{\overset{\underset{\mathrm{LU}}{}}{\approx}}
\begin{document}

\title{Accurate Modeling of Reduced-State Dynamics}

\author{Eric Chitambar}
\affiliation{Department of Physics and Astronomy{\mbox ,} Southern Illinois University, 
Carbondale, Illinois 62901, USA}
\email{echitamb@siu.edu}
\author{Ali Abu-Nada}
\affiliation{Department of Physics and Astronomy{\mbox ,} Southern Illinois University, 
Carbondale, Illinois 62901, USA}
\affiliation{Department of Physics and Engineering Physics{\mbox ,} Southeast Missouri State University, Cape Girardeau, MO 63701, USA}   
\author{Russell Ceballos}
\affiliation{Department of Physics and Astronomy{\mbox ,} Southern Illinois University, 
Carbondale, Illinois 62901, USA}
\author{Mark Byrd}
\affiliation{Department of Physics and Astronomy{\mbox ,} Southern Illinois University, 
Carbondale, Illinois 62901, USA}

\date{\today}

\begin{abstract}
In this paper we return to the problem of reduced-state dynamics in the presence of an interacting environment.  The question we investigate is how to appropriately model a particular system evolution given some knowledge of the system-environment interaction.  When the experimenter takes into account certain known features of the interaction such as its invariant subspaces or its non-local content, it may not be possible to consistently model the system evolution over a certain time interval using a standard Stinespring dilation, which assumes the system and environment to be initially uncorrelated.  Simple examples demonstrating how restrictions can emerge are presented below.  When the system and environment are qubits, we completely characterize the set of unitaries that always generate reduced dynamics capable of being modeled using a consistent Stinespring dilation.  Finally, we show how any initial correlations between the system and environment can be certified by observing the system transformation alone during certain joint evolutions.
\end{abstract}

\maketitle

\section{Introduction}

In quantum mechanics, the time evolution of a closed system is described by a unitary transformation acting on its state space.  However, when the system is interacting with some external environment, its reduced-state dynamics is no longer unitary.  How to properly characterize the dynamics of such open quantum systems has been an area of extensive research and a source of lively debate \cite{Sudarshan-1961a, Pechukas-1994a, Alicki-1995a, Stelmachovic-2001a, Jordan-2004a, Carteret-2008a, Rodriguez-Rosario-2010a, Modi-2012a, Dominy-2014a}.  In general, an overall unitary evolution of the combined system-environment translates into a reduced dynamics of the system given by
\begin{equation}
\label{Eq:ReducedDynamics1}
\rho_{S}(t_0)\to\rho_S'(t)=tr_E[U_{SE}(\rho_{SE}(t_0))U_{SE}^\dagger].
\end{equation}
All is well if the system and environment are known to begin interacting at some time $t_0$, prior to which they are uncorrelated; i.e. $\rho_{SE}(t_0)=\rho_S\otimes \sigma_E$.  Then Eq. \eqref{Eq:ReducedDynamics1} has the standard Stinespring form, and thus the system evolution can be described by a completely positive (CP) map $\mc{E}(\rho_S)=tr_E[U_{SE}(\rho_S\otimes\sigma_E)U_{SE}^\dagger]$ \cite{Stinespring-1955a, Nielsen-2000a}.  But difficulty arises when the system and environment are initially correlated, and one must be careful when trying to interpret Eq. \eqref{Eq:ReducedDynamics1} as a map acting on the system's state space \cite{Dominy-2014a}.  

In this paper, we are not interested in understanding Eq. \eqref{Eq:ReducedDynamics1} as anything more than a physical model for the particular state transformation $\rho_S\to\rho_S'$.  Although we will make no specific reference to ``maps'' in this scenario, the transformation problem fits within more general frameworks designed to handle restricted-domain subsystem maps \cite{Pechukas-1994a, Alicki-1995a, Modi-2012a, Dominy-2014a, Buscemi-2014a, Dominy-2015a}.  We suppose that an experimenter measures the system to be in state $\rho_S$ at time $t_0$ and in state $\rho'_S$ at some later time $t$; Eq. \eqref{Eq:ReducedDynamics1} then offers a physical description of how this transformation came about in terms of an initial correlation with the environment ($\rho_{SE}$) and a subsequent interaction ($U_{SE}$).  There, of course, will be many different choices of $\rho_{SE}$ and $U_{SE}$ that will successfully model the observed transformation $\rho_S\to\rho_S'$ via Eq. \eqref{Eq:ReducedDynamics1}.  However, if the experimenter has some knowledge of $U_{SE}$ it will greatly limit the possible $\rho_{SE}$.  

The primary goal of this paper is to begin understanding what type of initial system-environment states $\rho_{SE}$ will correctly model a given reduced state transformation $\rho_S\to\rho_S'$ if the joint unitary $U_{SE}$ is \textit{a priori} known or at least partially known.  For example, if $U_{SE}$ is a $d\otimes d$ unitary with known eigenstates $\ket{\varphi_i}$ and $\rho_S\to\rho_S'$ is an observed reduced-state transformation under $U_{SE}$, what are the compatible initial states $\rho_{SE}$?   We study this question and provide a general necessary condition relating the eigenstates of $U_{SE}$ with any $\rho_{SE}$ that generates the reduced-state transformation $\rho_S\to\rho_S'$.

We also consider permissible models for $\rho_S\to\rho_S'$ if a certain nonlocal character is demanded of $U_{SE}$.  Joint unitaries acting on $\mc{H}_S\otimes\mc{H}_E$ can be grouped into  local unitary (LU) equivalence classes such that $U_{SE}\LUeq U_{SE}'$ iff there exists product unitaries $V=V_S\otimes V_E$ and $W=W_S\otimes W_E$ such that $U_{SE}=WU_{SE}'V$.  Two unitaries that are not LU equivalent can be regarded as possessing different types of nonlocality since the action of one cannot be simulated using the other combined with arbitrary local unitaries.  If it is known that the system-environment unitary belongs to a certain equivalence class, is it always possible to choose a product state $\rho_{SE}$ to model an observed transformation $\rho_S\to\rho_S'$?  Below we show that when $\mc{H}_S\otimes\mc{H}_E$ is two qubits, this can be done iff $U_{SE}$ is LU equivalent to either the swap operator or the identity.  Even stronger, for every other type of unitary, there exists system transformations that require an initially entangled $\rho_{SE}$ to accurately model.  

One particularly interesting transformation we consider involves converting a mixed system state into a pure one via Eq. \eqref{Eq:ReducedDynamics1}, a process we generically call \textit{purity extraction}.  If we assume that $\rho_{SE}=\rho_S\otimes\sigma_E$ is a product state, what LU classes of unitaries will generate the reduced-state purity extraction $\rho_S\to\op{\psi}{\psi}$?  We identify the class of unitaries in two-qubits that can be used to model such a process.  Studying these types of transformations may have thermodynamics applications as purity obtained from a quantum system can then be used to perform work \cite{Horodecki-2003b, Horodecki-2005a, delRio-2011a}.

As an application of this line of inquiry, we consider using the knowledge of $U_{SE}$ to certify initial correlations between the system and environment through measuring exclusively the system.  For example, suppose that in Eq. \eqref{Eq:ReducedDynamics1} we take $U_{SE}$ to be the two-qubit CNOT gate and $\rho_{SE}$ to be the entangled pure state $\ket{\psi}=\sqrt{1/2}(\ket{0+}+\ket{1-})$, where $\ket{\pm}=\sqrt{1/2}(\ket{0}\pm\ket{1})$.  Then the induced transformation on $S$ is $\mbb{I}/2\to\op{0}{0}$.  It is not difficult to show that this transformation is impossible for any product state input $\rho_{SE}= \mbb{I}/2\otimes\sigma_E$ whenever CNOT is being implemented.  This means that if we know the joint dynamics to be governed by $U_{SE}$, but we do not know the initial joint state $\rho_{SE}$, then detecting the system transformation $\mbb{I}/2\to\op{0}{0}$ means that $S$ and $E$ cannot be uncorrelated in the initial state.  In fact, from the analysis of Sect. \ref{Sect:Two-qubit examples} it can further be shown that $\rho_{SE}$ must be entangled to witness the transformation $\mbb{I}/2\to\op{0}{0}$ under CNOT.  The general question then becomes the following: Given a state $\rho_S$, can one always find a unitary $U_{SE}$ such that a transformation \textit{\`{a} la} Eq. \eqref{Eq:ReducedDynamics1} is possible \textit{only if $\rho_{SE}$ is entangled}?  If so, then detecting the system transformation $\rho_S\to\rho'_S$ would necessarily indicate that the system is initially entangled with the environment.  In this paper, we construct a unitary $U_{SE}$ for every $\rho_S$ that generates a transformation $\rho_S\to\rho_S'$ which is possible only if $\rho_S$ is initially entangled with the environment.  By the same reasoning, we also construct a very general experimental procedure for certifying when $\rho_{SE}$ fails to be a product state, based on measuring the system state alone. 

Before proceeding in more detail, we introduce the following definition which fixes the language used to describe Eq. \eqref{Eq:ReducedDynamics1}.
\begin{definition}
Let $U$ be any unitary acting on $\mc{H}_S\otimes\mc{H}_E$.  We say that a transformation $\rho_S\to\rho'_S$ can be \textbf{$U$-generated by $\rho_{SE}$} if (i) $tr_E[\rho_{SE}]=\rho_S$ and (ii) $tr_E[U\rho_{SE}U^\dagger]=\rho_S'$.  The transformation $\rho_S\to\rho'_S$ is a \textbf{$U$-generated physical transformation} if it is $U$-generated by some density matrix $\rho_{SE}$.
\end{definition}
We now begin in Sect. \ref{Sect:Symm} by considering the restrictions in modeling a physical transformation when invariances of $U_{SE}$ are known.  We will then consider in Sect. \ref{Sect:LU} modeling transformations using unitaries from certain LU equivalence classes.  Finally, in Sect. \ref{Sect:EntanglementDetection} we turn to the question of detecting entanglement by observing reduced-state dynamics.

\section{Modeling with Known Eigenstates of $U_{SE}$: General Restrictions on the Spectrum or $\rho_S'$}
\label{Sect:Symm}

In this section we prove a general necessary condition for the permissible joint states that generate a particular transformation when eigenstates of the unitary are known.
\begin{lemma}
\label{Lem:Ueigenstate}
Suppose that $\ket{\varphi}_{SE}$ is an eigenstate of $U$; i.e. $U\ket{\varphi}=e^{i\theta}\ket{\varphi}$.  Then a transformation $\rho_S\to\rho_S'$ can be $U$-generated by $\rho_{SE}$ only if 
\begin{equation}
\tau\cdot\lambda_k(tr_E\op{\varphi}{\varphi})\leq \lambda_k(\rho_S')\notag
\end{equation} 
for all $k\in\{1,\cdots,rk[\rho_{S}]\}$, $\lambda_k(\sigma)$ denotes the $k^{th}$ largest eigenvalue of $\sigma$, and 
\[\tau=\begin{cases}\tfrac{1}{\bra{\varphi}\rho_{SE}^{-1}\ket{\varphi}}\quad\text{if $\ket{\varphi}\in supp(\rho_{SE})$}\\0 \quad\text{otherwise}.\end{cases}
\]
\end{lemma}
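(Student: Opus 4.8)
The plan is to exploit the fact that an eigenstate supplies a rank-one projector left invariant by $U$, and that whenever this projector lies in the support of $\rho_{SE}$ it can be ``peeled off'' with a definite maximal weight. First I would record the elementary linear-algebra fact that, for $\ket{\varphi}\in\supp(\rho_{SE})$, the largest $\tau\geq 0$ for which $\rho_{SE}-\tau\op{\varphi}{\varphi}\succeq 0$ is exactly $\tau=1/\bra{\varphi}\rho_{SE}^{-1}\ket{\varphi}$, with $\rho_{SE}^{-1}$ the inverse on the support. This follows by conjugating with $\rho_{SE}^{-1/2}$: the condition becomes $\1-\tau\op{w}{w}\succeq 0$ restricted to $\supp(\rho_{SE})$, where $\ket{w}=\rho_{SE}^{-1/2}\ket{\varphi}$, and this holds iff $\tau\langle w|w\rangle\leq 1$, i.e. $\tau\leq 1/\bra{\varphi}\rho_{SE}^{-1}\ket{\varphi}$.

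With this $\tau$ fixed, I would write the convex decomposition $\rho_{SE}=\tau\op{\varphi}{\varphi}+(1-\tau)\omega$, where $\omega=(\rho_{SE}-\tau\op{\varphi}{\varphi})/(1-\tau)\succeq 0$ is a genuine density matrix, its trace being one by construction (when $\tau=1$ one simply has $\rho_{SE}=\op{\varphi}{\varphi}$ and the bound below holds with equality). The crucial input is now that $\ket{\varphi}$ is an eigenstate, so $U\op{\varphi}{\varphi}U^\dagger=\op{\varphi}{\varphi}$; the global phase $e^{i\theta}$ cancels in the projector. Conjugating the decomposition by $U$ and tracing out $E$ gives $\rho_S'=tr_E[U\rho_{SE}U^\dagger]=\tau\,tr_E\op{\varphi}{\varphi}+(1-\tau)\,tr_E[U\omega U^\dagger]$. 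Since the second term is positive semidefinite, this yields the operator inequality $\rho_S'\succeq \tau\,tr_E\op{\varphi}{\varphi}$.

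From here the eigenvalue bound is immediate by Weyl monotonicity: if $A\succeq B\succeq 0$ then $\lambda_k(A)\geq\lambda_k(B)$ for every $k$, so $\lambda_k(\rho_S')\geq\tau\,\lambda_k(tr_E\op{\varphi}{\varphi})$, which is the claim. The restriction to $k\leq rk[\rho_S]$ is natural rather than essential: peeling off the projector also shows $\op{\varphi}{\varphi}\preceq\tau^{-1}\rho_{SE}$, whence $tr_E\op{\varphi}{\varphi}\preceq\tau^{-1}\rho_S$ and $\supp(tr_E\op{\varphi}{\varphi})\subseteq\supp(\rho_S)$, so $\lambda_k(tr_E\op{\varphi}{\varphi})=0$ once $k>rk[\rho_S]$ and those inequalities are vacuous. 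The degenerate case $\ket{\varphi}\notin\supp(\rho_{SE})$ is disposed of separately: there $\tau=0$ and the inequality reduces to $0\leq\lambda_k(\rho_S')$, true since $\rho_S'\succeq 0$. I expect the only delicate point to be the maximal-$\tau$ computation, in particular justifying that the optimization lives entirely within $\supp(\rho_{SE})$ and that the pseudoinverse is the right object; the eigenstate invariance and the Weyl step are routine.
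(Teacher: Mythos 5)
Your proof is correct and shares the paper's overall skeleton---peel a multiple of $\op{\varphi}{\varphi}$ off $\rho_{SE}$, use $U\op{\varphi}{\varphi}U^\dagger=\op{\varphi}{\varphi}$, and finish with Weyl monotonicity---but you justify the key decomposition differently, and your route is the more robust one. The paper invokes the Schr\"odinger--HJW theorem: writing $\rho_{SE}=\sum_i p_i\op{e_i}{e_i}$ and $\ket{\varphi}=\sum_i\alpha_i\ket{e_i}$, it asserts that $\rho_{SE}=q\op{\varphi}{\varphi}+(1-q)\sigma$ holds for a density operator $\sigma$ whenever $\max_i|u_i|^2=1$ with $u_i=\sqrt{q/p_i}\,\alpha_i$, yielding $q=\min_i p_i/|\alpha_i|^2\geq\tau$. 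But the HJW condition on a single row of an isometry is $\sum_i|u_i|^2\leq 1$, not a condition on $\max_i|u_i|^2$, and with the paper's $q$ the residual operator need not be positive: for $\rho_{SE}=\mathrm{diag}(3/4,1/4)$ and $\ket{\varphi}=\sqrt{1/2}(\ket{0}+\ket{1})$ one gets $q=1/2$, yet $\rho_{SE}-q\op{\varphi}{\varphi}$ has determinant $-1/16<0$. The correct constraint gives exactly $q\leq 1/\bra{\varphi}\rho_{SE}^{-1}\ket{\varphi}=\tau$, which is precisely the maximal weight you obtain directly by conjugating with $\rho_{SE}^{-1/2}$ on the support. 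So your elementary linear-algebra step both replaces the HJW machinery with something self-contained and silently repairs the paper's bookkeeping: the decomposition that actually exists is $\rho_{SE}=\tau\op{\varphi}{\varphi}+(1-\tau)\omega$, which is all the lemma needs, since the stated bound only involves $\tau$. Your treatment of the edge cases ($\tau=1$ forcing $\rho_{SE}=\op{\varphi}{\varphi}$; $\ket{\varphi}\notin\supp(\rho_{SE})$ giving a vacuous inequality) is sound, and your observation that the inequalities for $k>rk[\rho_S]$ hold automatically because $tr_E\op{\varphi}{\varphi}\preceq\tau^{-1}\rho_S$ forces the relevant eigenvalues to vanish is a small bonus the paper does not record.
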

\begin{proof}
Suppose that $\rho_S\to\rho_S'$ is $U$-generated by $\rho_{SE}$.  If $\ket{\varphi}\not\in\supp(\rho_{SE}$), then the lemma trivially holds.  So assume $\ket{\varphi}\in\supp(\rho_{SE}$).  Taking a spectral decomposition $\rho_{SE}=\sum_i p_i\op{e_i}{e_i}$, we can write $\ket{\varphi}=\sum_i\alpha_i\ket{e_i}$ for some coefficients $\alpha_i$.  By the Schr\"{o}dinger-HJW Theorem, it follows that we can expand $\rho_{SE}=q \op{\varphi}{\varphi}+(1-q)\sigma$ for some density operator $\sigma$ whenever $\sqrt{q}\ket{\varphi}=\sum_{i}u_i\sqrt{p_i}\ket{e_i}$ with $\max_i|u_i|^2= 1$ \cite{Hughston-1993a}.  Since $\sqrt{q}\ket{\varphi}=\sum_i\left(\sqrt{\frac{q}{p_i}}\alpha_i\right)\sqrt{p_i}\ket{e_i}$, we have that $q=\min_i\tfrac{p_i}{|\alpha_i|^2}\geq \tfrac{1}{\bra{\varphi}\rho_{SE}^{-1}\ket{\varphi}}$.  Since the transformation is generated by $\rho_{SE}$, we have
\begin{align}
tr_E[U\rho_{SE} U^\dagger]=q\cdot tr_E\op{\varphi}{\varphi}+(1-q)tr_E[U\sigma U^\dagger]=\rho_S'.\notag
\end{align}
The proposition then follows from Weyl's Theorem, which gives that $\lambda_k (A)\leq \lambda_k(A+B)$ for non-negative operators $A$ and $B$ \cite{Horn-1985a}. 
\end{proof}
\noindent Note the bound of Lemma \ref{Lem:Ueigenstate} can be made trivially tight by taking $\rho_{SE}=\op{\varphi}{\varphi}$.

\subsubsection{Examples}

To demonstrate the use of Lemma \ref{Lem:Ueigenstate}, consider any two-qubit unitary with a maximally entangled eigenstate $\ket{\varphi}=\sqrt{1/2}(\ket{00}+\ket{11})$.  Suppose the initial state has the form $\rho_{SE}=\mbb{I}/2\otimes(\mbb{I}/2+\vec{m}\cdot\vec{\sigma})$, and the interaction is observed to generate the final system state $\rho_{S}'=p\op{0}{0}+(1-p)\op{1}{1}$.  Are there any constraints on the possible form of $\rho_E$?  Since $\tau=1/tr(\rho_E^{-1})=1/4-m_1^2-m_2^2-m_3^2$, we can deduce from Lemma \ref{Lem:Ueigenstate} the necessary condition that 
\[7/8+1/2\sum_{i=1}^3m_i^2\geq p\geq 1/8-1/2\sum_{i=1}^3m_i^2.\]
On the other hand, suppose that $\rho_{SE}=\mbb{I}/2\otimes\mbb{I}/2+\sum_{i=1}^3m_i\sigma_i\otimes\sigma_i$, so that $\rho_S$ is still the maximally mixed state.  Then $\tau=1/4+m_1-m_2+m_3$, and 
\[7/8-1/2(m_1-m_2+m_3)\geq p\geq 1/8+1/2(m_1-m_2+m_3).\]
As $(m_1,m_2,m_3)\to(1/4,-1/4,1/4)$, the only compatible values of $p$ converge to $1/2$, which is expected since $\rho_{SE}\to \op{\varphi}{\varphi}$.

\subsubsection{Application: Increasing Purity Using an Interacting Bath}

The previous example can be interpreted as providing bounds on how close the state $\rho_S'$ can be brought to a pure state, given the fact that $U_{SE}$ acts invariantly on a maximally entangled state.  We generalize this idea here.

For a general system state $\rho_S$, any function depending on its eigenvalues remain invariant under unitary evolution.  Examples of such functions are the Von Neumann entropy and its first-order approximation approximation, the linear entropy.  The linear entropy of $\rho_S$ is given by $1-\gamma(\rho_S)$, where $\gamma(\rho_S)$ is the so-called purity of $\rho_S$:
\[\gamma(\rho_S)=tr[\rho_S^2]=\sum_{k=1}^{rk[\rho_S]}[\lambda_k(\rho_S)]^2.\]
Due to unitary invariance of $\gamma$, the purity of a given state can only be increased through an interacting external environment via Eq. \eqref{Eq:ReducedDynamics1}.  How large can the purity be increased given known conditions of $\rho_{SE}$ and $U$?

We can use Lemma \ref{Lem:Ueigenstate} to compute a general upper bound on the increase in purity of $\rho_{S}'$ if the initial system-environment state is assumed to take the product state form $\rho_{SE}=\rho_S\otimes\mbb{I}/d_E$.  First assume that $\rho_S$ is invertible.  For an arbitrary $\ket{\varphi}\in\mc{H}_S\otimes\mc{H}_E$, we have
\begin{align}
\bra{\varphi}\rho_{SE}^{-1}\ket{\varphi}&=d_E tr[\rho_S^{-1} tr_E(\op{\varphi}{\varphi})]\notag\\
&\leq d_E tr[\rho_S^{-1}]\leq d_Ed_S/\lambda_r(\rho_S),\notag
\end{align}
where $r=rk[\rho_S]$.  Hence for every eigenstate $\ket{\varphi}$ of $U_{SE}$, Lemma \ref{Lem:Ueigenstate} gives
\begin{equation}
\lambda_k(\rho_S')\geq\frac{\lambda_r(\rho_S)}{d_Ed_S}\lambda_k(tr_E\op{\varphi}{\varphi}).
\end{equation}
In terms of the purity, we therefore obtain 
\begin{equation}
\label{Eq:PurityBound}
\gamma(\rho_S')\leq 1-\left(\frac{\lambda_r(\rho_S)}{d_Ed_S}\right)^2\max_{\varphi}\sum_{k=1}^{rk[\rho_S]}\lambda_k(tr_E\op{\varphi}{\varphi})^2,
\end{equation}
where the maximization is taken over all eigenstates of $U$.  Note that this upper bound trivially holds when $\rho_S$ is not invertible.  Eq. \eqref{Eq:PurityBound} then gives a general limit to how much the purity of $\rho_S$ can be increased by interaction with an initially uncorrelated bath, $\mbb{I}/d_E$, when one or more eigenstates of the interacting unitary are known.

\section{Modeling with Different Classes of Unitaries: Two-Qubit Solutions}

\label{Sect:LU}

We now consider the various restrictions that emerge for the structure of $\rho_{SE}$ when modeling a given system transformation with unitaries belonging to different LU equivalence classes.  In particular, we are interested in the following question: For some unitary $U_{SE}$, can every $U$-generated physical transformation be generated by a product state?  If not, what are the types of unitaries for which this is possible?

\subsection{General Conditions for Simulating without Initial Correlations}

When the system and environment form a two-qubit system, the above two questions can be completely solved.  To state the result, we begin by recalling two special types of two-qubit unitaries: (1)  The SWAP operator $\mbb{F}$ is the unitary map whose action is $\mbb{F}\ket{\alpha\beta}=\ket{\beta\alpha}$ for any product state $\ket{\alpha\beta}$, (2) An $SE$-controlled unitary operator $U_c^{SE}$ is any unitary of the form $U^{SE}_c=\op{0}{0}_S\otimes \mbb{I}+\op{1}{1}_S\otimes V_E$ where $U_E$ and $V_E$ are unitaries acting on the environment.  Note that every $U^{SE}_c$ is defined with respect to some fixed computational basis and with $S$ being the control and $E$ the target.  When the roles of control and target are reversed, we have an $ES$-controlled unitary $U_c^{ES}$, which therefore has the form $U^{ES}_c=\mbb{F}U^{SE}_c\mbb{F}$ for some $SE$-controlled unitary $U^{SE}_c$.  These unitaries can be used to represent special equivalence classes of two-qubit unitaries.  Letting $\LU$ denote the set of two-qubit local unitaries, we define the following:
\begin{align}
\Swap&:=\{U: \text{$U\LUeq\mbb{F}$}\},\notag\\
\CUU&:=\{U: \text{$U\LUeq U_c^{SE}U_c^{ES}$ for any $U_c^{SE}$ and $U_c^{ES}$}\}.\notag
\end{align}
  It can  be shown that $\Swap\not=\CUU$.  Finally, turning to the initial system-environment state, the state $\rho_{SE}$ is separable (or unentangled) if it can be expressed as a convex combination of pure product states:
\[\rho_{SE}=\sum_kp_k\op{\alpha_k}{\alpha_k}_S\otimes\op{\beta_k}{\beta_k}_E.\]
If the state can be decomposed in such a way that the $\ket{\beta_k}$ are also pairwise orthogonal, then the states is said to be quantum-classical (QC).  With these classifications, we can now state the main result of this section.
\begin{theorem}
\label{Thm:Main}
Suppose that the system and environment consists of two qubits.  Every $U$-generated physical transformation $\rho_S\to\rho_S'$ can be $U$-generated by a product state iff $U$ belongs to $\LU\cup\Swap$.  If $U$ belongs to $\CUU$, the transformation can be $U$-generated by a QC state.  On the other hand, if $U$ does not belong to $\Swap\cup\CUU$, then there exists physical transformations that cannot be $U$-generated by any separable state.
\end{theorem}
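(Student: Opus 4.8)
The plan is to produce, for every $U\notin\Swap\cup\CUU$, an explicit \emph{purity-extraction} transformation $\rho_S\to\op{\psi}{\psi}$ that is $U$-generated by some (necessarily entangled) $\rho_{SE}$ but by no separable state. Since the property ``every physical transformation is separably generated'' is invariant under $U\mapsto(W_S\otimes W_E)U(V_S\otimes V_E)$ (local unitaries send product states to product states and separable states to separable states, merely relabelling $\rho_S$ and $\rho_S'$ by local rotations), I may first replace $U$ by its canonical form $\exp\bigl(i\sum_{j}c_j\,\sigma_j\otimes\sigma_j\bigr)$ and carry out everything in terms of the local-invariant parameters $(c_1,c_2,c_3)$, in which $\LU$, $\Swap$ and $\CUU$ occupy specific loci.

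The central observation is that a \emph{pure} target strongly constrains any separable model. If $\rho_S\to\op{\psi}{\psi}$ is $U$-generated by $\rho_{SE}=\sum_k p_k\op{\alpha_k}{\alpha_k}_S\otimes\op{\beta_k}{\beta_k}_E$, then $\op{\psi}{\psi}=\sum_k p_k\,tr_E[U\op{\alpha_k\beta_k}{\alpha_k\beta_k}U^\dagger]$; because a convex combination of density operators equals a rank-one projector only if every summand equals that projector, each product input must satisfy $U\ket{\alpha_k\beta_k}=\ket{\psi}\otimes\ket{\eta_k}$. Hence separable generation of $\rho_S\to\op{\psi}{\psi}$ is possible iff $\rho_S\in\mathrm{conv}\{\op{\alpha}{\alpha}:\ket{\alpha}\in S_\psi(U)\}$, where $S_\psi(U)$ collects the system-parts of the product states in $U^\dagger(\ket{\psi}\otimes\mc{H}_E)$. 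Writing $U^\dagger(\ket{\psi}\otimes\ket{\eta})=\eta_0\ket{A}+\eta_1\ket{B}$ with $\ket{A}=U^\dagger\ket{\psi 0}$ and $\ket{B}=U^\dagger\ket{\psi 1}$, the product condition is the vanishing of the $2\times2$ coefficient determinant, a single binary quadratic $Q(\eta_0,\eta_1)=0$. Thus $S_\psi(U)$ is always either two points, one point, or (when $Q\equiv0$) the full image of the $\ket{\eta}$-family under the system-part map; and the last case splits according to whether the environment-parts of $\ket{A},\ket{B}$ coincide (giving $S_\psi$ the whole Bloch sphere, the SWAP-like situation) or the system-parts coincide (giving $S_\psi$ a single point, the controlled-gate situation). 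In every case $\mathrm{conv}\,S_\psi(U)$ is a point, a segment, or the full Bloch ball.

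Against this I place $R_\psi(U)=\{tr_E[U^\dagger(\op{\psi}{\psi}\otimes\tau_E)U]:\tau_E\geq0,\ tr\,\tau_E=1\}$, a convex set as the linear image of the $\tau_E$-ball. Because a pure reduced state forces the global state to factor, \emph{any} $\rho_{SE}$ with $tr_E[U\rho_{SE}U^\dagger]=\op{\psi}{\psi}$ obeys $U\rho_{SE}U^\dagger=\op{\psi}{\psi}\otimes\tau_E$; hence $R_\psi(U)$ is exactly the set of $\rho_S$ for which $\rho_S\to\op{\psi}{\psi}$ is $U$-generated at all, and $\mathrm{conv}\,S_\psi(U)\subseteq R_\psi(U)$ is the subset generable separably. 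The desired entanglement-requiring transformation therefore exists iff this inclusion is strict for some $\psi$: any $\rho_S\in R_\psi(U)\setminus\mathrm{conv}\,S_\psi(U)$ gives a physical $\rho_S\to\op{\psi}{\psi}$ with no separable model. It thus remains to show that the inclusion is strict for some $\psi$ exactly when $U\notin\Swap\cup\CUU$: for $\Swap$ one has $S_\psi$ equal to everything so $R_\psi=\mathrm{conv}\,S_\psi$, while for $\CUU$ the equality $R_\psi=\mathrm{conv}\,S_\psi$ should persist as the pure-target shadow of the QC-generation statement, and for all other canonical forms $R_\psi$ is genuinely fatter than the point or segment $\mathrm{conv}\,S_\psi$.

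I expect this last step to be the main obstacle, because both $\CUU$ gates and generic gates have \emph{finite} $S_\psi$, so the dichotomy is invisible from $S_\psi$ alone and emerges only from comparing the affine image $R_\psi$ against $\mathrm{conv}\,S_\psi$. Establishing strictness will require computing, from $(c_1,c_2,c_3)$, the rank and position of the map $\tau_E\mapsto tr_E[U^\dagger(\op{\psi}{\psi}\otimes\tau_E)U]$ and verifying that its image is not swallowed by the segment $\mathrm{conv}\,S_\psi$; the degenerate loci where it is swallowed should reproduce precisely $\Swap\cup\CUU$. Choosing $\ket{\psi}$ so that $S_\psi$ collapses to a single point makes the construction most explicit, since there it suffices to exhibit any mixed $\rho_S\in R_\psi$ (as in the $\mbb{I}/2\to\op{0}{0}$ example under CNOT) to certify that entanglement is unavoidable.
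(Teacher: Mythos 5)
Your setup is sound and independently recovers the paper's key rigidity tool: the observation that a pure target forces every pure-product component of a separable model onto $U\ket{\alpha_k\beta_k}=\ket{\psi}\ket{\eta_k}$ is exactly the lever the paper pulls, and your convex reformulation (separable generation of $\rho_S\to\op{\psi}{\psi}$ iff $\rho_S\in\mathrm{conv}\,S_\psi(U)$, all generation iff $\rho_S\in R_\psi(U)$, since a pure reduced output forces $U\rho_{SE}U^\dagger=\op{\psi}{\psi}\otimes\tau_E$) is correct and would even let you bypass the paper's appeal to the Sanpera rank-two separable decomposition. The LU-reduction to a canonical form is likewise the paper's first move (it uses the Kraus--Cirac magic-basis form $U_d=\sum_ie^{-i\lambda_i}\op{\Phi_i}{\Phi_i}$, equivalent to your $\exp(i\sum_jc_j\sigma_j\otimes\sigma_j)$).

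The gap is that everything you call ``the main obstacle'' is the entire content of the theorem, and your plan for it rests on a misreading of the case structure plus a circularity. First, for genuine $\CUU$ gates your quadratic $Q$ does \emph{not} degenerate: e.g.\ for $U_d=U_c^{SE}U_c^{ES}$ and $\ket{\psi}=\ket{0}$ one finds $S_0=\{\ket{0},V^\dagger\ket{0}\}$ --- two points, exactly as for generic gates --- with $R_0$ equal to the resulting segment (this is precisely what realizes the QC claim). So the shortcut ``choose $\psi$ with $S_\psi$ a single point and exhibit any mixed $\rho_S\in R_\psi$'' is unavailable, and for every candidate $\psi$ you face the genuinely hard question of whether a specific mixed state lies on a specific segment; you assert the degenerate loci ``should reproduce precisely $\Swap\cup\CUU$'' without any mechanism for proving it. The paper's actual resolution shows why a fixed-$\psi$ computation is the wrong-shaped tool: by Proposition \ref{Lem:LUCanonical} a gate outside $\LU\cup\Swap$ has some $\lambda_3-\lambda_4\notin\{0,\pi\}$, which yields \emph{two} physical purity extractions simultaneously, $\rho_S^{\pm}\to\op{\pm}{\pm}$ generated by entangled $\ket{\Psi^\pm}\mapsto\ket{\pm}\ket{\pm}$; assuming both are separably generable, a purification/isometry argument forces the environment parts orthogonal ($\ip{b_1}{b_2}=0$), and combining the $+$ and $-$ constraints shows $U_d$ maps an orthogonal product basis pairwise to product states, hence $U_d\in\CUU$ by Proposition \ref{Lem:ProdBasis} --- contradiction. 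Neither target alone produces a contradiction, so your one-$\psi$-at-a-time strictness test cannot be expected to close the argument as stated. Finally, the positive halves of the theorem are not proved: product generation for $\Swap$ (the easy $\rho_{SE}=\rho_S\otimes\rho_S'$ construction, for \emph{arbitrary} targets, not just pure ones) is omitted, and the QC-generation claim for $\CUU$ is invoked as ``the pure-target shadow of the QC-generation statement'' --- i.e.\ you assume one of the three assertions being proved in order to establish your $R_\psi=\mathrm{conv}\,S_\psi$ equality on that locus.
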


The full proof is carried out in the Appendix.  The analysis there relies heavily on the special structure of two-qubit systems.  For instance, there exists a so-called ``magic basis'' of $2\otimes 2$ systems given by
\begin{align}
\ket{\Phi_1}&=\sqrt{1/2}(\ket{01}-\ket{10})&\ket{\Phi_2}&=-i\sqrt{1/2}(\ket{00}-\ket{11})\notag\\
\ket{\Phi_3}&=\sqrt{1/2}(\ket{00}+\ket{11}&\ket{\Phi_4}&=-i\sqrt{1/2}(\ket{01}+\ket{10}).\notag
\end{align}
Working in the magic basis has proven to be very helpful in the study of two-qubit entanglement \cite{Bennett-1996a, Hill-1997a}.  For instance, the concurrence of a pure state $\ket{\Psi}$ is given by $C(\Psi)=\sum_{i=1}^4|c_i^2|$ where the $c_i$ are the component of $\ket{\Psi}$ when expressed in the magic basis \cite{Hill-1997a}.  From this it follows that $\ket{\Psi}$ is a product state iff $\sum_{i=1}^4 c_i^2=0$.  Furthermore, in Ref. \cite{Kraus-2001a}, Kraus and Cirac have shown that every two-qubit unitary $U$ can be decomposed as
\begin{equation}
\label{Eq:UnitaryCanonical0}
U=(U_S\otimes U_E) U_d (V_S\otimes V_E),
\end{equation}
where $U_d$ is diagonal in the magic basis: 
\begin{equation}
\label{Eq:UnitaryCanonical}
U_d=\sum_{i=1}^4 e^{-i\lambda_i}\op{\Phi_i}{\Phi_i},
\end{equation}
and $0\leq\lambda_i<2\pi$.   It should be noted that the matrix $U_d$ is not unique for a given $U$ since the $\ket{\Phi_i}$ can be interconverted by a local unitary.  With the form of a general two-qubit unit greatly simplified by Eq. \eqref{Eq:UnitaryCanonical}, we compute in the Appendix the various types of transformations that can be $U_d$-generated for a given $U_d$.  The transformations that are shown to require an entangled initial state $\rho_{SE}$ involve transforming a rank-two state $\rho_S$ into a pure state $\rho_S'=\op{\psi}{\psi}$.

\subsubsection*{Robustness of Theorem \ref{Thm:Main}}

We now give a simple continuity argument showing that the above result holds even for nonzero error in the initial and final states.  Namely, for every unitary in $U\not\in\Swap\cup \CUU$, there exists a limit to how well every $U$-generated physical transformation can be approximated using an uncorrelated system-environment state.   Suppose that $U\not\in\Swap\cup \CUU$ is given.  By Theorem \ref{Thm:Main}, there exists a transformation $\rho_S\to\rho_S'$ that cannot be $U$-generated by a product state.  Letting $\mc{D}(\mbb{C}^2)$ denote the set of one-qubit density matrices, define the map $\varphi:\mc{D}(\mathbb{C}^2\otimes\mathbb{C}^2)\to \mc{D}(\mathbb{C}^2)$ given by $\varphi\left(\rho_{SE}\right)=tr_E[U_{SE}(\rho_{SE})U^\dagger_{SE}]$, which is uniformly continuous (it is a linear map on a finite, compact set).  Next, let 
\[\epsilon=\min_{\omega_E\in \mc{D}(\mathbb{C}^2)} ||\phi\left(\rho_S\otimes\omega_E\right)-\rho_S'||_1,\]
where the minimum indeed exists due to compactness of $\mc{D}(\mathbb{C}^2)$, and $\epsilon>0$ due to Theorem \ref{Thm:Main}.  Here, we are using $||A||_1=tr\sqrt{A^\dagger A}$ to denote the trace norm.
 Therefore, by uniform continuity of $\varphi$, there exists a $\delta>0$ such that for \textit{any} $\omega_E\in \mc{D}(\mathbb{C}^2)$ we have
\[||\rho_{SE}-\rho_S\otimes\omega_E||_1<\delta\qquad\Rightarrow\qquad ||\phi\left(\rho_{SE}\right)-\rho_S'||_1>\epsilon/2.\]
Hence, as long as the prepared state $\rho_{SE}$ is within $\delta$-distance (w.r.t. the trace norm) to any product state of the form $\rho_S\otimes\omega_E$, we are guaranteed that the final state will be at least $\epsilon/2$-distance away from the target state $\rho_S'$.  Note that $\rho_{SE}$ need not be a product state for this error bound to hold.

\subsection{Purity Extractions in Two Qubits with No Initial Correlations}

The main technique used to prove Theorem \ref{Thm:Main} involves constructing, for a given $U_d$, one particular transformation of a mixed state $\rho_S$ into a pure one that is impossible unless $\rho_{SE}$ has a certain form.  Let us consider the transformation $\rho_S\to\op{\psi}{\psi}$ in more generality, first by focusing on the situation when the system and environment are initially in a product state.
\begin{lemma}
\label{Lem:PurityExtraction}
Suppose that the system and environment consists of two qubits.  If $\rho_S\to\op{\psi}{\psi}$ is $U$-generated by some product state $\rho_S\otimes \rho_E$ with $rk(\rho_S)=2$, then $U\in\Swap$.
\end{lemma}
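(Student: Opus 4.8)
The plan is to turn ``the output is pure'' into a rigid statement about how $U$ acts on product subspaces. First I would observe that if $tr_E[U(\rho_S\otimes\rho_E)U^\dagger]=\op{\psi}{\psi}$ is pure, then the whole post-interaction state factors: $U(\rho_S\otimes\rho_E)U^\dagger=\op{\psi}{\psi}\otimes\sigma_E$ for some state $\sigma_E$. Indeed, $tr[(\op{\psi^\perp}{\psi^\perp}\otimes\mbb{I})\,U(\rho_S\otimes\rho_E)U^\dagger]=\bra{\psi^\perp}\rho_S'\ket{\psi^\perp}=0$, and since the operator is positive it must be supported on $\ket{\psi}\otimes\mc{H}_E$. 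Next a rank count pins down $\rho_E$: the left side has rank $rk(\rho_S)\,rk(\rho_E)=2\,rk(\rho_E)$ while the right side has rank $rk(\sigma_E)\le 2$, and since $U$ preserves rank this forces $rk(\rho_E)=1$. So $\rho_E=\op{\phi}{\phi}$ is pure and $\sigma_E$ is full rank.

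With $\rho_E$ pure, the input $\rho_S\otimes\op{\phi}{\phi}$ is supported on the two-dimensional product subspace $\mc{H}_S\otimes\ket{\phi}$, and its image lies in the two-dimensional subspace $\ket{\psi}\otimes\mc{H}_E$; comparing dimensions, $U$ maps $\mc{H}_S\otimes\ket{\phi}$ \emph{onto} $\ket{\psi}\otimes\mc{H}_E$, and by unitarity it carries $\mc{H}_S\otimes\ket{\phi^\perp}$ onto $\ket{\psi^\perp}\otimes\mc{H}_E$. (Writing $\rho_S=\lambda_1\op{s_1}{s_1}+\lambda_2\op{s_2}{s_2}$, when $\lambda_1\ne\lambda_2$ the images $U\ket{s_i}\ket{\phi}$ are forced to be product vectors $\ket{\psi}\ket{\sigma_i}$; when $\lambda_1=\lambda_2$ one still gets the same subspace statement.) This crossed-subspace picture is exactly the signature of a swap, so it is the natural object to analyze.

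I would then translate this into LU data. Absorbing local unitaries (which preserve membership in $\Swap$), take $\ket{\phi}=\ket{0}_E$ and $\ket{\psi}=\ket{0}_S$, so that $U$ carries $\mc{H}_S\otimes\ket{i}_E$ onto $\ket{i}_S\otimes\mc{H}_E$ for $i=0,1$. Equivalently $\mbb{F}U$ preserves each subspace $\mc{H}_S\otimes\ket{i}_E$, hence $\mbb{F}U$ is, up to a local factor, a unitary controlled by $E$ with target $S$; that is, $U\LUeq\mbb{F}\,U_c^{ES}$. To finish I must show the controlled factor is trivial, so that $U\LUeq\mbb{F}$. Here I would pass to the Kraus--Cirac form $U_d=\sum_j e^{-i\lambda_j}\op{\Phi_j}{\Phi_j}$ and use the concurrence test: the product inputs $\ket{s_i}\ket{\phi}$ satisfy $\sum_j (c^{(i)}_j)^2=0$ in the magic basis while their product images require $\sum_j e^{-2i\lambda_j}(c^{(i)}_j)^2=0$, and I would try to use these relations (for $i=1,2$, with the shared system factor $\ket{\psi}$) to force the phases $\lambda_j$ into the swap configuration.

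The main obstacle is precisely this last exclusion. The crossed-subspace condition constrains $U$ on only one two-dimensional subspace, and it is already satisfied by $U=\mbb{F}\,U_c^{ES}$ for a \emph{nontrivial} $E$-controlled phase $U_c^{ES}$: such a $U$ sends $\mc{H}_S\otimes\ket{0}$ onto $\ket{0}_S\otimes\mc{H}_E$ and drives both eigenvectors of $\rho_S$ to the single pure state $\ket{0}_S$, yet it need not lie in $\Swap$. So the delicate point is to show that the remaining, finer data of the problem---that $\rho_S$ has rank exactly two and that its two eigenvectors are carried to the \emph{same} $\ket{\psi}$---actually excludes every nontrivial controlled dressing, and not merely most of them. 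I expect the real work to be a magic-basis case analysis organized by the degeneracy of $\rho_S$ and the multiplicities of the phases $e^{-i\lambda_j}$; this is where I would concentrate the effort, and where I would carefully verify that the stated conclusion $U\in\Swap$---rather than the weaker $U\in\Swap\cup\CUU$---is genuinely forced.
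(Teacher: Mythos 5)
Up to the crossed-subspace picture, your argument coincides with the paper's proof: the paper likewise deduces $U(\rho_S\otimes\rho_E)U^\dagger=\op{\psi}{\psi}\otimes\omega_E$ with $\rho_E=\op{\eta}{\eta}$ pure and $\omega_E$ sharing the spectrum of $\rho_S$, writes the resulting action $\ket{a}\ket{\eta}\to\ket{\psi}\ket{a'}$, $\ket{a}\ket{\eta^\perp}\to\ket{\psi^\perp}\ket{a''}$, and then concludes in a single sentence that ``up to local unitaries, this requires $U$ to be $\mbb{F}$.'' The step you could not close is exactly that sentence, and your suspicion is justified: the obstacle you isolate is not a technical nuisance to be removed by a magic-basis case analysis --- it is a genuine counterexample to the lemma as stated. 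Take
\begin{equation*}
U=\mbb{F}\,\bigl(\mbb{I}\otimes\op{0}{0}+H\otimes\op{1}{1}\bigr),
\end{equation*}
with $H$ the Hadamard gate, i.e.\ $U\ket{a}\ket{0}=\ket{0}\ket{a}$ and $U\ket{a}\ket{1}=\ket{1}(H\ket{a})$. For \emph{any} rank-two $\rho_S$ one has $U(\rho_S\otimes\op{0}{0})U^\dagger=\op{0}{0}\otimes\rho_S$, so $\rho_S\to\op{0}{0}$ is $U$-generated by a product state satisfying every hypothesis of the lemma. Yet $U\notin\Swap$: since $\mbb{F}(W_S\otimes W_E)\mbb{F}=W_E\otimes W_S$, a unitary $U'$ lies in $\Swap$ iff $\mbb{F}U'$ is a product unitary, and here $\mbb{F}U=\mbb{I}\otimes\op{0}{0}+H\otimes\op{1}{1}$ has operator Schmidt rank two. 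So the hypotheses pin $U$ down only to the class $U\LUeq\mbb{F}U_c^{ES}$ that you correctly derived (your fallback guess $\Swap\cup\CUU$ is not what is forced either); no further argument can shrink this class to $\Swap$, because every member of it realizes the purity extraction from a product input.

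In other words, your proposal is not merely incomplete --- it is \emph{sharper} than the paper's proof, which commits precisely the unjustified leap you refused to make: nothing in the displayed action forces the conditional output basis $\{\ket{0''},\ket{1''}\}$ to coincide (up to phases) with $\{\ket{0'},\ket{1'}\}$, and that freedom is exactly a nontrivial $E$-controlled dressing of $\mbb{F}$. The honest conclusion of the argument is the classification $U\LUeq\mbb{F}U_c^{ES}$; recovering $U\in\Swap$ would require strictly more data than the hypothesis supplies, e.g.\ that $U$ generate such a purity extraction for two distinct, non-orthogonal environment states $\ket{\eta}$, or the global product-preservation property that underlies Proposition~\ref{Lem:LUCanonical}. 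Note that this flaw is quarantined: the appendix proof of Theorem~\ref{Thm:Main} never invokes Lemma~\ref{Lem:PurityExtraction} (the $\CUU$ and generic cases are handled by separate direct arguments, and for $U\notin\Swap\cup\CUU$ the theorem asserts only the existence of \emph{some} transformation requiring correlations, which a counterexample of the above form does not contradict), but Lemma~\ref{Lem:PurityExtraction} itself needs either a strengthened hypothesis or the weakened conclusion just stated.
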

\begin{proof}
The joint transformation is $U(\rho_S\otimes\rho_E) U^\dagger=\op{0}{0}\otimes\omega_E$, which requires that $\rho_E$ is pure while $\omega_E$ and $\rho_S$ must have the same spectrum.  Hence, the action of $U$ must take the form
\begin{align}
\ket{0}\ket{\eta}&\overset{\underset{U}{}}{\longrightarrow}\ket{\psi}\ket{0'}&\ket{0}\ket{\eta^\perp}&\overset{\underset{U}{}}{\longrightarrow}\ket{\psi^\perp}\ket{0''}\notag\\
\ket{1}\ket{\eta}&\overset{\underset{U}{}}{\longrightarrow}\ket{\psi}\ket{1'}&\ket{1}\ket{\eta^\perp}&\overset{\underset{U}{}}{\longrightarrow}\ket{\psi^\perp}\ket{1''}.
\end{align}
Up to local unitaries, this requires $U$ to be $\mbb{F}$.
\end{proof}

\subsection{A Family of Two-Qubit Purity Extractions}

\label{Sect:Two-qubit examples}

Next, we turn to purity extraction for an arbitrary initial $\rho_{SE}$.  For simplicity, we focus on the specific $U$-generated transformation $\mbb{I}/2\to\op{0}{0}$ with $U_{SE}$ belonging to a two-parameter family of unitaries given by
\begin{align}
U(\theta,\gamma)=\begin{pmatrix}\tfrac{1}{\sqrt{2}}&0&0&\tfrac{1}{\sqrt{2}}\\0&\cos\theta&\sin\theta&0\\0&-e^{i\gamma}\sin\theta & e^{i\gamma}\cos\theta&0\\\tfrac{1}{\sqrt{2}}&0&0&-\tfrac{1}{\sqrt{2}}\end{pmatrix}.
\end{align}

Since the unitary $U$ preserves the rank of $\rho_{SE}$, the transformation $\mbb{I}/2\to\op{0}{0}$ in two qubits is possible only if $\rho_{SE}$ has rank at most two.  Let $\ket{e_1}$ and $\ket{e_2}$ being the eigenstates of $\rho_{SE}$ with corresponding eigenvalues $p_1$ and $p_2$.  Then $\ket{0}$ being the final system state requires that
\begin{align}
U\ket{e_1}&=\ket{0}(\cos\mu\ket{0}+e^{i\nu}\sin\mu\ket{1})\notag\\
U\ket{e_2}&=\ket{0}(-\sin\mu\ket{0}+e^{i\nu}\cos\mu\ket{1})\notag
\end{align}
for some values $\mu$ and $\nu$.  Applying $U^\dagger$ to both sides gives 
\begin{align}
\ket{e_1}&=\cos\mu\ket{\Phi_3}+e^{i\nu}\sin\mu(\cos\theta\ket{01}-e^{-i\gamma}\sin\theta\ket{10})\notag\\
\ket{e_2}&=-\sin\mu\ket{\Phi_3}+e^{i\nu}\cos\mu(\cos\theta\ket{01}-e^{-i\gamma}\sin\theta\ket{10}).\notag
\end{align}
Demanding that $tr_E(p_1\op{e_1}{e_1}+p_2\op{e_2}{e_2})=\mbb{I}/2$, we first arrive at the conditions
\begin{align}
\label{Eq:PurityExtract1}
\tfrac{1}{2}&=\tfrac{p\cos^2\mu+(1-p)\sin^2\mu}{2}+(p\sin^2\mu+(1-p)\cos^2\mu)\cos^2\theta\notag\\
&=\tfrac{p\cos^2\mu+(1-p)\sin^2\mu}{2}+(p\sin^2\mu+(1-p)\cos^2\mu)\sin^2\theta.
\end{align}
These can be simultaneously satisfied only if $0=(p\sin^2\mu+(1-p)\cos^2\mu)\cos 2\theta$, which has a solution if (a) $\cos^2\theta=\sin^2\theta=1/2$, or (b) $(p\sin^2\mu+(1-p)\cos^2\mu)=0$.  In case (a), Eq. \eqref{Eq:PurityExtract1} further requires that $p=1/2$.  In case (b), there is only one eigenvector $\ket{\Phi_3}$.  

In summary then, a unitary $U(\theta,\gamma)$ will $U$-generate the transformation $\mbb{I}/2\to\op{0}{0}$ iff  $\rho_{SE}=\op{\Phi_3}{\Phi_3}$ for arbitrary $(\theta,\gamma)$ unless  $\theta =\pi/4$.
However, if $\theta = \pi/4$, then $\rho_{SE}$ will be as following, 
\begin{equation}
\rho_{SE}=p\op{\Phi_3}{\Phi_3}+(1-p)\op{\Phi_4(\gamma)}{\Phi_4(\gamma)},
\end{equation}
where $\ket{\Phi_4(\gamma)}=\sqrt{1/2}(\ket{01}-e^{-i\gamma}\ket{10})$.  Note, this state is separable iff $p=1/2$; however, no product state solutions exist.

\section{A Method for Detecting Initial Correlations}

\label{Sect:EntanglementDetection}

We now propose one way in which the questions studied above might be used for detecting initial correlations between a quantum system and an environment.  

\subsection{Certifying the Presence of System-Environment Entanglement}

Without additional knowledge of the system-environment dynamics, initial entanglement between the system and the environment cannot be decided by measuring the system alone.  However, if we know a certain transformation $\rho_S\to\rho_S'$ can only be $U$-generated by an entangled $\rho_{SE}$, then observing this transformation under the coupling $U$ guarantees that the system and environment are initially entangled.  Of course, here we are assuming that the experimenter has access to a source of preparations $\rho_{SE}$, and standard tomographic techniques are used to estimate $\rho_S$ and $\rho_S'$.  This is similar in spirit to the idea of detecting entanglement through the use of \textit{entanglement witnesses}, which involves an observable $W_{SE}$ for which $tr[W_{SE}\rho_{SE}]<0$ only if $\rho_{SE}$ is entangled \cite{Horodecki-1996a, Lewenstein-2000b}.  However, unlike entanglement witnesses, the detection scheme described here only requires measurements to be made on system $S$.  The ability to certify the presence of entanglement based on measurement data of $S$ alone comes from knowledge of the global unitary $U_{SE}$.

A natural and practically relevant question is whether system-environment entanglement can always be detected in this manner.  More precisely, given some $\rho_S$, can we always find a $U$-generated transformation $\rho_S\to\rho_S'$ that is possible only if $\rho_S$ is the reduced state of some entangled state $\rho_{SE}$?  The following theorem shows this is always possible.
\begin{theorem}
\label{Thm:Entangled}
For every genuinely mixed state $\rho_S$, there exists a unitary $U_{SE}$ such that the transformation $\rho_S\to\op{0}{0}_S$ can be $U$-generated only by an entangled $\rho_{SE}$.
\end{theorem}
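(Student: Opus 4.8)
The plan is to convert the statement into a purely geometric question about subspaces of $\mc{H}_S\otimes\mc{H}_E$ and then build the required subspace by hand. The starting observation is that the target $\op{0}{0}_S$ is pure: whenever $tr_E[U\rho_{SE}U^\dagger]=\op{0}{0}_S$, a pure reduced state forces the post-interaction joint state to factorize as $U\rho_{SE}U^\dagger=\op{0}{0}_S\otimes\omega_E$ for some $\omega_E$. Equivalently, every $\rho_{SE}$ that $U$-generates $\rho_S\to\op{0}{0}_S$ is supported on the subspace $\mc{V}:=U^\dagger(\ket{0}_S\otimes\mc{H}_E)$, which has dimension $d_E$. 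Conversely, given any subspace $\mc{V}$ with $\dim\mc{V}=d_E$, I can realize it by defining $U$ to carry $\mc{V}$ onto $\ket{0}_S\otimes\mc{H}_E$ by an isometry and extending arbitrarily on the orthogonal complements (whose dimensions agree). For that $U$, the states $U$-generating $\rho_S\to\op{0}{0}_S$ are \emph{exactly} the density operators supported on $\mc{V}$ whose $S$-marginal equals $\rho_S$. So the theorem reduces to exhibiting a subspace $\mc{V}$ of dimension $d_E$ such that (i) some state supported on $\mc{V}$ has $S$-marginal $\rho_S$, and (ii) every such state is entangled.

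For (ii) I would use the elementary fact that if $\sigma=\sum_k p_k\op{a_kb_k}{a_kb_k}$ is separable with $\supp(\sigma)\subseteq\mc{V}$, then each product vector $\ket{a_kb_k}$ lies in $\supp(\sigma)\subseteq\mc{V}$; consequently $tr_E\sigma$ lies in the convex hull of the set $\{\op{a}{a}_S:\ket{ab}\in\mc{V}\text{ is a product vector}\}$. Hence (ii) holds as soon as $\rho_S$ lies outside this convex hull. For (i) it suffices that $\mc{V}$ contain a purification $\ket{\Psi}$ of $\rho_S$ (then $\rho_{SE}=\op{\Psi}{\Psi}$ works); such a purification is automatically entangled because $\rho_S$ is genuinely mixed, and it exists once $d_E\geq rk(\rho_S)=:r$.

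The construction then splits on $d_S$. When $d_S\geq 3$ I would take $\mc{V}$ to be a completely entangled subspace (one containing no product vectors) of dimension $d_E=r$ that contains $\ket{\Psi}$; such a subspace exists because the maximal dimension of a completely entangled subspace of $\mathbb{C}^{d_S}\otimes\mathbb{C}^{d_E}$ is known to be $(d_S-1)(d_E-1)$, which exceeds $d_E$ for $d_S\geq 3,\ d_E\geq 2$, and a generic such subspace through the fixed non-product vector $\ket{\Psi}$ avoids the product vectors by a dimension count. Then the convex hull in (ii) is empty and every state supported on $\mc{V}$ is entangled, so the claim is immediate. The delicate case is $d_S=2$, where $(d_S-1)(d_E-1)=d_E-1<d_E$, so a $d_E$-dimensional $\mc{V}$ must contain product vectors and can never be completely entangled. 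Here I would instead lean on the marginal criterion: taking $d_E=2$, a generic $\mc{V}=\text{span}\{\ket{\Psi},\ket{w}\}$ contains only the one or two product vectors given by the roots of the quadratic $\det(aM_1+bM_2)$, where $M_1,M_2$ reshape $\ket{\Psi},\ket{w}$ into $2\times 2$ matrices, and their $S$-marginals are two pure qubit states whose connecting chord in the Bloch ball is a measure-zero set. Since $\rho_S$ is a genuinely mixed qubit state, its Bloch vector lies strictly inside the ball, so a generic choice of $\ket{w}$ places this chord off $\rho_S$, yielding $\rho_S\notin\text{conv}\{\op{\alpha_1}{\alpha_1},\op{\alpha_2}{\alpha_2}\}$ and hence (ii).

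Finally I would assemble the pieces: fix $\mc{V}$ as above, construct the associated unitary $U$, and verify that $\op{\Psi}{\Psi}$ is an entangled state that $U$-generates $\rho_S\to\op{0}{0}_S$ (so the transformation is physical), while the reduction shows that \emph{every} $U$-generating state is supported on $\mc{V}$ and therefore entangled. I expect the main obstacle to be precisely the qubit system case $d_S=2$: no completely entangled subspace of the required dimension exists there, so one cannot argue by absence of product vectors and must instead control the \emph{marginals} of the product vectors of $\mc{V}$ — the crux being to steer the (finitely many) product-vector $S$-marginals away from the fixed interior point $\rho_S$ while keeping a purification of $\rho_S$ inside $\mc{V}$.
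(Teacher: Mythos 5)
Your proposal is correct in substance, and it takes a genuinely different route from the paper. The paper's proof fixes the spectral decomposition $\rho_S=\sum_{i=1}^r p_i\op{\Psi_i}{\Psi_i}$ and the explicit shift unitary $U_0\ket{\Psi_i}\ket{j}=\ket{i-j}\ket{j}$, contracts $U_0\rho_{SE}U_0^\dagger=\op{0}{0}\otimes\sigma_E$ against $\op{j}{j'}_E$ to conclude that every compatible $\rho_{SE}$ is maximally correlated, $\sum_{ij}\alpha_{ij}\op{\Psi_i,i}{\Psi_j,j}$, and then applies the PPT criterion. You instead reduce everything to subspace geometry: purity of the target forces $\supp(\rho_{SE})\subseteq\mc{V}=U^\dagger(\ket{0}\otimes\mc{H}_E)$, and your two-way reduction (any $d_E$-dimensional $\mc{V}$ is realizable, and the $U$-generating states are exactly those supported on $\mc{V}$ with $S$-marginal $\rho_S$) is exact. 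The separability step is the standard fact that every product vector in a decomposition lies in the support, so a separable generator forces $\rho_S$ into the convex hull of the $S$-marginals of the product vectors of $\mc{V}$; completely entangled subspaces then settle $d_S\geq 3$, and a product-vector count settles qubits. The only polish needed is to replace your two genericity claims by explicit witnesses: for $d_S\geq 3$ the dimension count $r\leq(d_S-1)(r-1)$ makes a generic $r$-dimensional subspace through the fixed entangled purification miss the Segre variety, and for qubits you can be fully explicit — in the eigenbasis of $\rho_S$, $\mc{V}=\mathrm{span}\{\sqrt{p}\ket{00}+\sqrt{1-p}\ket{11},\,\ket{01}\}$ contains the single product vector $\ket{01}$ (a double root of the quadratic), whose marginal $\op{0}{0}$ differs from the genuinely mixed $\rho_S$.

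It is worth flagging that your route does more than differ in style: it repairs a genuine gap in the paper's own argument. Maximally correlated states are \emph{not} always entangled when $\rho_S$ is mixed: the diagonal instance $\sigma=\sum_i p_i\op{\Psi_i,i}{\Psi_i,i}$ is separable (indeed quantum-classical), has rank $r\geq 2$, has marginal $\rho_S$, and satisfies $U_0\sigma U_0^\dagger=\op{0}{0}\otimes\sum_i p_i\op{i}{i}$ — so the paper's shift unitary \emph{does} admit a separable generator of $\rho_S\to\op{0}{0}$. The paper's PPT computation produces a negative eigenvalue only when some off-diagonal $\alpha_{ij}$ with $i\not=j$ is nonzero, which rank greater than one alone does not force. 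Structurally, $U_0$ is an $E$-controlled unitary $\sum_j V_j\otimes\op{j}{j}$, and for any such unitary the output $S$-marginal depends only on the diagonal blocks $\bra{j}\rho_{SE}\ket{j}_E$, so dephasing in the $E$ basis yields a QC state generating the same transformation; hence no $E$-controlled unitary can certify entanglement, only correlation — which is precisely what the same shift operator is (correctly) used for in Theorem \ref{Thm:ProdState}. Your construction evades this trap exactly because your $\mc{V}$ is engineered to be (almost) free of product vectors, so your argument establishes the theorem where the paper's argument, as written, does not.
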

\begin{proof}
Let $\rho_S$ have a spectral decomposition $\rho_S=\sum_{i=1}^rp_i\op{\Psi_i}{\Psi_i}$, where we assume $r\geq 2$.  For an arbitrary basis $\{\ket{j}_E\}_{j}$ for the environment, define a unitary $U_0$ whose action on basis states is
\begin{equation}
U_0\ket{\Psi_i}\ket{j}=\ket{i-j}\ket{j},
\end{equation}
where subtraction is taken mod $r$.  Then an arbitrary $\rho_{SE}$ for which $tr_E[\rho_{SE}]=\rho_S$ can be expanded as
\begin{equation}
\rho_{SE}=\sum_{i,j;i',j'} \alpha_{i,j;i',j'}\op{\Psi_i}{\Psi_{i'}}_S\otimes\op{j}{j'}_E.
\end{equation}
We compute
\begin{align}
U_0(\rho_{SE})U_0^\dagger&=\rho_{SE}'=\op{0}{0}\otimes\sigma_E\notag\\
&=\sum_{i,j;i',j'} \alpha_{i,j;i',j'}\op{i-j}{i'-j'}\otimes\op{j}{j'}.\notag
\end{align}
By considering the various contractions $_{E}\bra{j}\rho_{SE}'\ket{j'}_E$, it is easy to see that $i=j$ and $i'=j'$ for all the nonzero terms.  Hence the only compatible system-environment state is maximally correlated:
\begin{equation}
 \rho_{SE}=\sum_{ij}\alpha_{ij}\op{\Psi_i,i}{\Psi_j,j}
\end{equation} 
It is well-known that such states are always entangled unless $\rho_S$ is pure \cite{Rains-1999a}.  For completion, we supply a quick proof of this fact.  Taking a partial transpose of $\rho_{SE}$ gives 
\[\rho_{SE}^\Gamma=\sum_{i,j}\alpha_{ij}\op{\Psi_i,j}{\Psi_j,i}\]
From this we see that the support of $\rho_{SE}^\Gamma$ decomposes into one-dimensional subspaces (the $i^{th}$ spanned by $\ket{\Psi_i,i}$), and two-dimensional subspaces (the $i,j^{th}$ spanned by $\{\ket{\Psi_i,j},\ket{\Psi_j,i}\}$ for $i\not=j$).  Since $\rho$ is hermitian, on each of the two-dimensional subspaces, $\rho_{SE}^\Gamma$ has the form $\left(\begin{smallmatrix}0&\alpha_{ij}\\\alpha^*_{ij}&0\end{smallmatrix}\right)$.  Thus, the eigenvalues of $\rho_{SE}$ are $\alpha_{ii}$ and $\pm|\alpha_{ij}|$ for $i\not=j$.  By the PPT criterion of separability \cite{Peres-1996a}, this proves that $\rho_{SE}$ is entangled whenever $\rho_{SE}$ has rank greater than one.  
\end{proof}

\subsection{Certifying the Presence of System-Environment Correlations}

One largely practical drawback of Theorem \ref{Thm:Entangled} is that can be used to detect entanglement only if the system's evolved state is close to $\op{0}{0}$.  How small must $1-\bra{0}\rho_S'\ket{0}$ be in order to definitively certify entanglement?  We leave this seemingly complicated question open for future research.  However, if instead of initial entanglement, we focus on initial product state preparation, an experimentally useful answer can be given to this question. Recall that for a density matrix $\rho$, we let $\gamma(\rho)=tr[\rho^2]$. 
\begin{theorem}
\label{Thm:ProdState}
For every $\rho_S$, there exists a unitary $U_{SE}$ such that every transformation $\rho_S\to\rho_S'$ satisfying 
\[\bra{0}\rho_S'\ket{0}\leq \sqrt{\gamma(\rho_S)}\]
can only be $U$-generated by an initially correlated $\rho_{SE}$ (i.e. $\rho_{SE}$ cannot be a product state).
\end{theorem}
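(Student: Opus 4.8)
The plan is to reuse, for a given $\rho_S$, the very unitary $U_0$ constructed in the proof of Theorem~\ref{Thm:Entangled}, and to compute exactly how much $\ket{0}$-population a product preparation can produce under it. Writing the spectral decomposition $\rho_S=\sum_{i}p_i\op{\Psi_i}{\Psi_i}$ and fixing the environment basis $\{\ket{j}_E\}$ of the same dimension $r=rk[\rho_S]$, recall that $U_0\ket{\Psi_i}\ket{j}=\ket{i-j}\ket{j}$ with subtraction mod $r$. First I would feed in an arbitrary product state $\rho_{SE}=\rho_S\otimes\sigma_E$ and trace out the environment. Since $U_0$ permutes the system register conditioned on $j$ while acting as the identity on $E$, it is block diagonal in the environment index; hence the partial trace annihilates every off-diagonal term of $\sigma_E$ and leaves the diagonal state $\rho_S'=\sum_{i,j}p_i\,\sigma_{jj}\op{i-j}{i-j}$. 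Reading off the $\ket{0}$ component then gives the clean expression $\bra{0}\rho_S'\ket{0}=\sum_j p_j\,\sigma_{jj}$.

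The heart of the argument is a single Cauchy--Schwarz estimate. Setting $q_j:=\sigma_{jj}$, which is a probability vector, I would bound $\sum_j p_j q_j\le\big(\sum_j p_j^2\big)^{1/2}\big(\sum_j q_j^2\big)^{1/2}$ and use $\sum_j q_j^2\le\sum_j q_j=1$ together with $\sum_j p_j^2=\gamma(\rho_S)$ to obtain $\bra{0}\rho_S'\ket{0}\le\sqrt{\gamma(\rho_S)}$. For a genuinely mixed $\rho_S$ this is in fact strict, since $\sum_j p_j q_j\le\max_i p_i<\sqrt{\gamma(\rho_S)}$ whenever at least two eigenvalues are nonzero, so the bound is never saturated by a product state. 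Hence every $U_0$-generated transformation arising from a product state is confined to the region $\bra{0}\rho_S'\ket{0}\le\sqrt{\gamma(\rho_S)}$, so that $\sqrt{\gamma(\rho_S)}$ is precisely the largest $\ket{0}$-population attainable without initial correlations, and witnessing any $U_0$-generated transformation whose overlap is at least $\sqrt{\gamma(\rho_S)}$ rules out a product preparation and certifies that $\rho_{SE}$ is initially correlated.

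I expect the main work to be bookkeeping rather than analysis. One must check carefully that the partial trace really does collapse $\sigma_E$ to its diagonal in the $\{\ket{j}\}$ basis -- this is exactly where the environment-block-diagonal structure of $U_0$ enters -- and one must dispose of the case $rk[\rho_S]<\dim\mc{H}_S$ by defining $U_0$ on $\supp(\rho_S)\otimes\mc{H}_E$ and extending it arbitrarily to a full unitary, which leaves the computation above untouched. The only genuinely delicate point is the orientation of the inequality: the computation places product preparations on the \emph{low-overlap} side of the threshold, so the set of transformations that certify initial correlation is the one whose $\ket{0}$-population reaches $\sqrt{\gamma(\rho_S)}$, the complement of the region one might naively read off from the bound. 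It is this feature -- combined with the fact that $\sqrt{\gamma(\rho_S)}=\sqrt{tr[\rho_S^2]}$ is directly estimable from the purity -- that makes the criterion experimentally useful, since the experimenter need only compare the measured final $\ket{0}$-population against the square root of the measured initial purity.
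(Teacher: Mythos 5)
Your proof is correct and is essentially the paper's own argument: the same shift unitary $U_0$ recycled from Theorem \ref{Thm:Entangled}, the same computation showing that a product input $\rho_S\otimes\sigma_E$ yields $\bra{0}\rho_S'\ket{0}=\sum_j p_j\sigma_{jj}$ (the paper writes this as $\sum_i p_ic_{ii}$), and the same Cauchy--Schwarz step $\sum_j p_j\sigma_{jj}\leq\sqrt{\textstyle\sum_j p_j^2}\sqrt{\textstyle\sum_j\sigma_{jj}^2}\leq\sqrt{\gamma(\rho_S)}$. Your remark on the orientation of the inequality is also on point---the theorem as printed has the threshold on the wrong side (the paper's own follow-up text, ``if the inequality in Theorem \ref{Thm:ProdState} is violated, the system and the environment must be initially correlated,'' confirms that correlation is certified when $\bra{0}\rho_S'\ket{0}>\sqrt{\gamma(\rho_S)}$), and your strictness observation $\sum_j p_j\sigma_{jj}\leq\max_i p_i<\sqrt{\gamma(\rho_S)}$ for genuinely mixed $\rho_S$ even handles the boundary case of equality, which the paper leaves implicit.
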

\begin{proof}
For $\rho_S=\sum_ip_i\op{\psi_i}{\psi_i}$, consider the unitary $U_{shift}$ defined in previous section.  If $\rho_E=\sum_{i',j'}c_{i'j'}\op{i'}{j'}$, then applying $U_{0}$ to an initial product state gives
\begin{align}
U_{shift}(\rho_S\otimes \rho_E )U_{shift}^\dagger=\sum_{i,i',j'}p_{i}c_{i'j'}\op{i-i',i'}{i-j',j'}.\notag
\end{align}
The final system state satisfies 
\begin{equation}
\bra{0}\rho_S'\ket{0}=\sum_{i}p_ic_{ii}\leq\sqrt{\sum_ip_i^2\sum_ic_{ii}^2}\leq\sqrt{\gamma(\rho_S)},\notag
\end{equation}
where we have used the Cauchy-Schwarz inequality.
\end{proof}
Theorem \ref{Thm:ProdState} provides an experimental criterion for detecting when $\rho_{SE}\not=\rho_S\otimes\rho_E$.  For many identical preparations of $\rho_{SE}$, the experimenter first uses tomographic techniques to estimate $\rho_S$ (actually the only knowledge of $\rho_S$ needed are its eigenstates and its purity).  Next, the unitary $U_{SE}$ is applied to the system and the interacting environment.  Finally, the experimenter measures $\rho_S'$ in the computational basis and estimates the value $\bra{0}\rho_S'\ket{0}$.  If the inequality in Theorem \ref{Thm:ProdState} is violated, the system and the environment must be initially correlated in each preparation of $\rho_{SE}$.

\section{Conclusion}

In this paper, we have begun investigating compatibility conditions between reduced-state dynamics $\rho_S\to\rho_S'$ and the underlying system-environment unitary evolution.  The motivating question has been how one can faithfully model such dynamics given partial knowledge of the system-environment interaction or of the initial system-environment joint state.  For example, it may be known that the interaction possesses certain symmetries which can be identified by eigenstates of the unitary $U_{SE}$.  In this case, we have shown in Lemma \ref{Lem:Ueigenstate} a necessary compatibility condition that must be satisfied which relates these eigenstates with the initial joint state $\rho_{SE}$ and the final system state $\rho'_{S}$.

If the unitary $U_{SE}$ is known completely, we considered whether its generated reduced-state dynamics can always be modeled in standard Stinespring form, using an initially uncorrelated system and environment state.  By examining two-qubit interactions, we proved in Theorem \ref{Thm:Main} that only product unitaries and SWAP always allow for such modeling.  Of course, reduced dynamics can always modeled by a standard Stinespring prescription if one places no restriction on $U_{SE}$ or considers the system and environment over a longer time interval.  However, such freedoms might not accurately reflect experimental situations, and our results confirm that a more general framework of open-system dynamics is needed in these cases.

As an application of our results, we describe in Theorems \ref{Thm:Entangled} and \ref{Thm:ProdState} how initial system-environment correlations can be witnessed by monitoring system evolution alone.  The potential to detect such correlations rests on prior knowledge of the system-environment interaction, but assumes no prior knowledge of the environment's state.  It would be interesting to see how this detection method can be strengthened.  For instance, the unitary described in Theorem \ref{Thm:ProdState} involves the ``shift'' operator $U_0$.  Perhaps statements like Theorem \ref{Thm:ProdState} can be made for more general classes of unitaries. 

A primary focus in this paper has been on transformations which we have called purity extractions.  The nature of these transformations make their quantitative analysis relatively simpler than more general transformations.  However, purity extractions are of fundamental interest from a thermodynamic perspective and when considering purity within a resource-theoretic framework \cite{Horodecki-2003b, Horodecki-2005a, delRio-2011a}.  We hope the work of this paper helps shed new light on these exciting topics as well as on general open system dynamics.

\begin{acknowledgments}
We would like to thank Kavan Modi for stimulating and constructive discussions on open-system dynamics.
\end{acknowledgments}

\bibliographystyle{apsrev4-1}

\bibliography{QuantumBib}

\section*{Appendix: Proof of Theorem 1}

\appendix

Before proceeding to the proof of Theorem 1, we provide two technical lemmas that explicate important properties of the classes $\Swap$ and $\CUU$.
\begin{proposition}
\label{Lem:LUCanonical}
Let $U_d$ be given as in Eq. \eqref{Eq:UnitaryCanonical}.  If $\lambda_i-\lambda_j\in\{0,\pi\}$ for all pairs $i$ and $j$, then $U_d$ is {\upshape LU} equivalent to either $\mbb{I}$ or $\mbb{F}$.
\end{proposition}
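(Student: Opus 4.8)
The plan is to exploit the hypothesis to collapse $U_d$ to a sign operator in the magic basis and then classify the finitely many resulting sign patterns. First I would note that $\lambda_i-\lambda_j\in\{0,\pi\}$ for all $i,j$ forces $e^{-i\lambda_i}=e^{-i\lambda_1}s_i$ with $s_i\in\{+1,-1\}$, so that $U_d=e^{-i\lambda_1}D$ where $D=\sum_{i=1}^4 s_i\op{\Phi_i}{\Phi_i}$. A scalar phase is a product unitary (put it entirely on $S$), so $U_d\LUeq D$; for the same reason the overall sign flip $D\mapsto -D$ is an LU operation. It therefore suffices to show every such $D$ is LU equivalent to $\mbb{I}$ or $\mbb{F}$, and I may classify the $s_i$ up to an overall flip by the number of entries equal to $-1$: zero (equivalently four), one (equivalently three), or two.

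The zero-$(-1)$ case is immediate, since then $D=\mbb{I}$ (and four $-1$'s give $-\mbb{I}\LUeq\mbb{I}$). For the two-$(-1)$ case I would pass to the operator basis $\{\mbb{I},\sigma_x\otimes\sigma_x,\sigma_y\otimes\sigma_y,\sigma_z\otimes\sigma_z\}$. All four of these are diagonal in the magic basis, and computing their eigenvalues on $\ket{\Phi_1},\dots,\ket{\Phi_4}$ shows that the four eigenvalue-vectors are mutually orthogonal in $\mbb{R}^4$, hence form a basis of the space of magic-diagonal operators. Expanding $D=a\,\mbb{I}+b\,\sigma_x\otimes\sigma_x+c\,\sigma_y\otimes\sigma_y+d\,\sigma_z\otimes\sigma_z$, orthogonality gives $a^2+b^2+c^2+d^2=1$ (since $D$ has eigenvalues $\pm1$ and each basis vector has squared norm $4$), while each coefficient is an integer multiple of $1/2$ (an inner product of two $\pm1$-vectors in dimension four is even). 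For a two-$(-1)$ pattern the $\mbb{I}$-coefficient satisfies $a=\tfrac14(s_1+s_2+s_3+s_4)=0$, and the only way three multiples of $1/2$ can have squares summing to $1$ is for a single one to equal $\pm1$. Hence $D=\pm\,\sigma_a\otimes\sigma_a$, a product unitary, so $D\LUeq\mbb{I}$.

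Finally, for the one-$(-1)$ case, $D=\mbb{I}-2\op{\Phi_i}{\Phi_i}$ is a reflection determined by a single magic-basis vector. Because every $\ket{\Phi_i}$ is maximally entangled and any two maximally entangled two-qubit states are related by a local unitary (write each as $(\mbb{I}\otimes V)\ket{\Phi_1}$), there is a product unitary $L=L_S\otimes L_E$ with $L\ket{\Phi_1}=\ket{\Phi_i}$ up to phase. Conjugating $\mbb{F}=\mbb{I}-2\op{\Phi_1}{\Phi_1}$ by $L$ then yields $D=L\,\mbb{F}\,L^\dagger\LUeq\mbb{F}$.

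The main obstacle I anticipate is the two-$(-1)$ case: one must set up the magic-basis eigenvalue table for the Pauli products correctly and argue that the coefficient constraints leave no room for genuine nonlocality, ruling out a spurious entangled reflection on a two-dimensional subspace. The zero- and one-$(-1)$ cases are then either trivial or reduce to the standard local equivalence of maximally entangled states.
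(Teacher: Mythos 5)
Your proof is correct, but it takes a genuinely different route from the paper. The paper's proof is indirect: it uses the magic-basis reality criterion (a state $\sum_i\alpha_i\ket{\Phi_i}$ is a product state iff the $\alpha_i$ are real up to a global phase) to observe that the hypothesis $\lambda_i-\lambda_j\in\{0,\pi\}$ makes $U_d$ send every product state to a product state, and then invokes---without proof---the known structural fact that $\LU$ and $\Swap$ are the only classes of two-qubit unitaries preserving product states. You instead give a direct, self-contained classification: reduce $U_d$ to a sign operator $D=\sum_i s_i\op{\Phi_i}{\Phi_i}$, note global phase and global sign are local operations, and enumerate sign patterns up to overall flip. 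Your identifications all check out: the two-$(-1)$ patterns are exactly $\pm\sigma_a\otimes\sigma_a$ (your parity-and-norm argument in the orthogonal basis $\{\mbb{I},\sigma_x\otimes\sigma_x,\sigma_y\otimes\sigma_y,\sigma_z\otimes\sigma_z\}$ is sound, and one can also verify it by brute enumeration of the six patterns against the eigenvalue vectors $(-1,-1,1,1)$, $(-1,1,-1,1)$, $(-1,1,1,-1)$); the one-$(-1)$ patterns are $L\,\mbb{F}\,L^\dagger$ for a product unitary $L$, using $\mbb{F}=\mbb{I}-2\op{\Phi_1}{\Phi_1}$ and the local-unitary transitivity on maximally entangled states. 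What each approach buys: yours is constructive (it exhibits the local unitaries explicitly) and removes the paper's reliance on an unproven external classification; the paper's argument is shorter modulo that cited fact, and its product-preservation viewpoint also delivers, essentially for free, the contrapositive used later in the Appendix---that if some $\lambda_i-\lambda_j\notin\{0,\pi\}$ then $U_d$ entangles some product state---which your enumeration does not address (nor does it need to for the stated proposition).
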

\begin{proof}
Note that $\LU$ and $\Swap$ are the only classes of two-qubit unitaries that map product states to product state.  So let $\ket{\tau}=\sum_{i=1}^4\alpha_i\ket{\Phi_i}$ be an arbitrary product state.  It is a fundamental property of the ``magic basis'' that no entanglement exists in $\ket{\tau}$ iff all the $\alpha_i$ are real (up to an overall phase).  Applying $U_d$ to $\ket{\tau}$ gives 
\[U_d\ket{\tau}=\sum_{i=1}^4e^{i\lambda_i}\alpha_i\ket{\Phi_i}=e^{i\lambda_j}\sum_{i=1}^4e^{i(\lambda_i-\lambda_j)}\alpha_i\ket{\Phi_i}\]
for any $j\in\{1,\cdots,4\}$.  This will be a product state only if $(\lambda_i-\lambda_j)\in\{0,\pi\}$.  Since $\ket{\tau}$ is an arbitrary product state, the proposition is proven.
\end{proof}
\begin{proposition}
\label{Lem:ProdBasis}
Suppose that $U$ performs the following pairwise transformation between orthogonal product states:
\begin{align}
\label{Eq:ProdBasis}
\ket{0}\ket{0}&\overset{\underset{U}{}}{\longrightarrow}\ket{0}\ket{0}&\ket{1}\ket{0}&\overset{\underset{U}{}}{\longrightarrow}\ket{1}\ket{b}\notag\\
\ket{a}\ket{1}&\overset{\underset{U}{}}{\longrightarrow}\ket{0}\ket{1}&\ket{a^\perp}\ket{1}&\overset{\underset{U}{}}{\longrightarrow}\ket{1}\ket{b^\perp}.
\end{align}
Then $U\in\CUU$.
\end{proposition}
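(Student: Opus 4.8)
The plan is to exhibit $U$ explicitly as an $SE$-controlled unitary composed with an $ES$-controlled unitary, which by definition places it in $\CUU$ (in fact with trivial surrounding local unitaries). First I would observe that Eq.~\eqref{Eq:ProdBasis} actually determines $U$ completely. The four input states $\ket{0}\ket{0}$, $\ket{1}\ket{0}$, $\ket{a}\ket{1}$, $\ket{a^\perp}\ket{1}$ form an orthonormal basis of $\mc{H}_S\otimes\mc{H}_E$: the orthonormal system pair $\{\ket{0},\ket{1}\}$ is tensored with $\ket{0}_E$ while the orthonormal pair $\{\ket{a},\ket{a^\perp}\}$ is tensored with the orthogonal $\ket{1}_E$. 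The same holds for the four output states $\ket{0}\ket{0}$, $\ket{1}\ket{b}$, $\ket{0}\ket{1}$, $\ket{1}\ket{b^\perp}$ (system $\ket{0}$ tensored with the orthonormal pair $\{\ket{0},\ket{1}\}_E$, and system $\ket{1}$ tensored with $\{\ket{b},\ket{b^\perp}\}_E$). Hence $U$ is uniquely pinned down as the unitary carrying the first basis to the second, and it suffices to produce any controlled factorization reproducing the same table.

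The factorization I would use proceeds in two stages. Let $C_1=\mbb{I}\otimes\op{0}{0}_E+W_S\otimes\op{1}{1}_E$ be the $ES$-controlled unitary with $W_S\ket{a}=\ket{0}$ and $W_S\ket{a^\perp}=\ket{1}$ (unitary since both pairs are orthonormal). This leaves the $E=\ket{0}$ inputs fixed and rotates the $E=\ket{1}$ inputs, so that after $C_1$ the four inputs become precisely the computational basis $\ket{0}\ket{0},\ket{1}\ket{0},\ket{0}\ket{1},\ket{1}\ket{1}$. Then let $C_2=\op{0}{0}_S\otimes\mbb{I}+\op{1}{1}_S\otimes V_E$ be the $SE$-controlled unitary with $V_E\ket{0}=\ket{b}$ and $V_E\ket{1}=\ket{b^\perp}$; it fixes the $S=\ket{0}$ states and rotates the environment of the $S=\ket{1}$ states. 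A direct evaluation of $C_2C_1$ on each of the four input vectors reproduces Eq.~\eqref{Eq:ProdBasis} line by line. Since $U$ and $C_2C_1$ agree on a basis, they are equal, and $U=C_2C_1$ is literally of the form $U_c^{SE}U_c^{ES}$, so $U\in\CUU$.

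The only genuine content is recognizing the correct ordering of the two stages — a system rotation conditioned on the environment, followed by an environment rotation conditioned on the system — and confirming that the intermediate states realign exactly as the full computational basis so the two controlled gates chain together without residual local corrections. I expect the main point to verify carefully is the orthonormality and completeness of both four-state families, since it is precisely this that guarantees $U$ is well defined by the table and therefore \emph{equal} to the constructed product $C_2C_1$ rather than merely agreeing with it on a proper subspace. Beyond organizing these verifications, no real obstacle arises.
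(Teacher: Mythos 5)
Your proof is correct and follows essentially the same route as the paper: factoring $U$ as an $ES$-controlled rotation of $\{\ket{a},\ket{a^\perp}\}$ into the computational basis followed by an $SE$-controlled rotation of the environment into $\{\ket{b},\ket{b^\perp}\}$, then invoking the fact that a unitary is determined by its action on a complete orthonormal basis. Your version merely spells out the orthonormality checks and the intermediate states more explicitly than the paper does (and, incidentally, gets the direction of the $V$ rotation right, which the paper states somewhat loosely).
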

\begin{proof}
The transformation is completed by first applying $\mathbb{I}\otimes\op{0}{0}+W\otimes\op{1}{1}$ and then $\op{0}{0}\otimes\mathbb{I}+\op{1}{1}\otimes V$, where $W$ (resp. $V$) rotates $\{\ket{a},\ket{a^\perp}\}$ (resp.  $\{\ket{b},\ket{b^\perp}\}$) into the computational basis.  Since a unitary is determined by its action on a complete basis, the conclusion of the lemma follows.
\end{proof}

\noindent\textit{Proof of Theorem \ref{Thm:Main}:}

By the decomposition of Eq. \eqref{Eq:UnitaryCanonical0}, it suffices to prove the theorem only for $U_d$-generated transformations.  Indeed, the transformation $\rho_S\to\rho_S'$ can be $U$-generated by some product/QC/separable state $\rho_{SE}$ iff the transformation $V_S \rho V_S^\dagger \to U_S^\dagger \sigma U_S$ is $U_d$-generated by the product/QC/separable state state $V_S\otimes V_E\rho_{SE} (V_S\otimes V_E)^\dagger$.

Let us first turn to the case when $U_d$ belongs $\Swap$.  By the same argument as just given, it suffices to consider when $U_d$ is $\mbb{F}$.  In this case, the transformation is $U_d$-generated by the product state $\rho_S\otimes\rho_S'$.  

Next, suppose that $U_d\in\CUU$.  Up to local unitaries, $U_d$ takes the form $U_d=U_c^{SE}U_c^{ES}$ where 
\begin{align}
\label{Eq:CUUcanonical}
U_c^{SE}&=\op{0}{0}\otimes\mbb{I}+\op{1}{1}\otimes U\notag\\
U_c^{ES}&=\mbb{I}\otimes\op{0}{0}+V\otimes \op{1}{1}
\end{align} 
with unitaries $U$ and $V$.    

Now, for any state $\rho_{SE}$, we can dephase in the computational basis: $\hat{\rho}_{SE}:=\sum_{i=1}^2\op{i}{i}_E\rho_{SE}\op{i}{i}_E$.  This is a QC state such that $tr_E\hat{\rho}_{SE}=\rho_S$, and direct calculation shows that it realizes the desired transformation under the action of $U_d$.  On the other hand, there does exist physical transformations $\rho_S\to\rho'_S$ that cannot be $U_d$-generated by any product state when $U_d\in\CUU\setminus\LU$.  First consider the case when $V\not=Z(\theta_1,\theta_2)$, where $Z(\theta_1,\theta_2)=\left(\begin{smallmatrix}e^{i\theta_1}&0\\0&e^{i\theta_2}\end{smallmatrix}\right)$ for arbitrary phases.  Then the state $\ket{\Psi}=\sqrt{1/2}(\ket{00}+(V^\dagger\otimes\mbb{I})\ket{01}$ is entangled.  Under the action of $U_d$ we have $\ket{\Psi}\to\ket{0}(U\ket{+})$, and hence the transformation on $S$ is $\rho_S=tr_E\op{\Psi}{\Psi}\to\op{0}{0}$ where $\rho_S$ is rank two.  If this transformation can be $U_d$-generated by a product state, then $U_d\rho_S\otimes\omega U_d^\dagger=\op{0}{0}\otimes\omega'$ for some $\omega$ and $\omega'$.  Since $\rho_S$ is rank two and $U_d$ preserves rank, we must have that $\omega=\op{\omega}{\omega}$ is a pure state and $\omega'$ is rank two.  Therefore, 
\[\rho_S\otimes \op{\omega}{\omega}=(U_c^{ES})^\dagger (U_c^{SE})^\dagger \op{0}{0}\otimes\omega'U_c^{SE}U_c^{ES}.\]
However, from Eq. \eqref{Eq:CUUcanonical}, $(U_c^{SE})^\dagger \op{0}{0}\otimes\omega'U_c^{SE}=\op{0}{0}\otimes\omega'$, and the final application of $U_c^{ES}$ will leave the environment invariant.  But this is impossible since $\omega'$ is rank two.  Hence, the transformation cannot be $U_d$-generated by a product state.  Now consider the case when $V\not=Z(\theta_1,\theta_2)$.  For such a unitary, it can be easily see that $U_d$ is LU equivalent to $U_c^{SE}=\op{0}{0}\otimes\mbb{I}=\op{1}{1}\otimes [UZ(0,\theta_2-\theta_1)]$.  Hence, it suffices to show a transformation $\rho_S\to\rho_S'$ that cannot be $U_c^{SE}$-generated by a product state.  This is done by repeating the same argument as just given except by choosing the initial entangled state $\ket{\Psi}=\sqrt{1/2}(\ket{00}+(\mbb{I}\otimes U^\dagger)\ket{10})$.  This state will indeed be entangled so long as $U\not=Z(\theta_1,\theta_2)$.  In the event that $U=Z(\theta_1,\theta_2)$ with $\theta_1\not=\theta_2$, simply use the initial entangled state $\sqrt{1/2}(\ket{0+}+(\mbb{I}\otimes U^\dagger)\ket{1+})$ in the previous argument.  When $U=Z(\theta_1,\theta_1)$, the operator $U_c^{SE}$ is LU.

We now consider unitaries $U_d$ not belonging to $\Swap\cup\CUU$.  By Lemma \ref{Lem:LUCanonical}, this means that $U_d$ will have at least one pair $\lambda_i$ and $\lambda_j$ for which $\lambda_i-\lambda_j\not\in\{0,\pi\}$.  We will show that for all such unitaries, there exists a physical transformation $\rho_S\to\rho_S'$ that cannot be $U_d$-generated by any separable state.

Without loss of generality, we can assume that $\lambda_3-\lambda_4\not\in\{0,\pi\}$.  Consequently, the states $\ket{\Psi^{\pm}}_{SE}=\sqrt{1/2}(e^{i\lambda_3}\ket{\Phi_3}\pm ie^{i\lambda_4}\ket{\Phi_4})$ are entangled, and so $\rho^\pm_S=tr_E\op{\Psi^{\pm}}{\Psi^{\pm}}$ are both full rank.  Applying $U_d$ to $\ket{\Psi^{\pm}}$ generates the state evolutions
\begin{align}
\ket{\Psi^\pm}\to U_d\ket{\Psi^\pm}=\sqrt{1/2}(\ket{\Phi_2}\pm\ket{\Phi_3})=\ket{\pm}\ket{\pm}.
\end{align}
Hence, the transformations $\rho^{\pm}_S\to\op{\pm}{\pm}$ are $U_d$-generated by the respective states $\op{\Psi^\pm}{\Psi^\pm}$.  Let us first just consider the transformation of $\rho^+_S$, and suppose this can be $U_d$-generated by some separable states $\tau$.  We then have $U_d\tau U_d^\dagger=\op{+}{+}\otimes\omega$ for some $\omega$,  and like before, $\tau$ must be rank two.  Any rank-two separable state of two qubits can be expressed as $\tau=\sum_{i=1}^2c_i\op{a_ib_i}{a_ib_i}$ \cite{Sanpera-1998a}, and each of the $\ket{a_ib_i}$ must transform into product states.  Thus, the following equalities hold true:
\begin{align}
\label{Eq:unitaries}
U_d\ket{a_1b_1}&=\ket{+}\ket{\beta_1}\notag\\
U_d\ket{a_2b_2}&=\ket{+}\ket{\beta_2}\notag\\
U_d\ket{\Psi^+}&=\ket{+}\ket{+},
\end{align}
where the $\ket{\beta_i}$ span the space of $E$.  We use them to express $\ket{+}=d_1\ket{\beta_1}+d_2\ket{\beta_2}$, and so with Eq. \eqref{Eq:unitaries}, we obtain $\ket{\Psi^+}=d_1\ket{a_1b_1}+d_2\ket{a_2b_2}$.  However, consistency of the initial state demands that both $\ket{\Psi^+}$ and $\ket{\tau}:=\sqrt{c_1}\ket{a_1b_1}_{SE}\ket{1}_{E'}+\sqrt{c_2}\ket{a_2b_2}_{SE}\ket{2}_{E'}$ are purifications of $\rho^{+}_S$.  Hence, there must exist an isometry $W:SE\to SEE'$ such that
\begin{align}
W\ket{\Psi^+}&=d_1\ket{a_1}\otimes W\ket{b_1}+d_2\ket{a_2}\otimes W\ket{b_2}&\notag\\
&=\sqrt{c_1}\ket{a_1}\ket{b_1}\ket{1}+\sqrt{c_2}\ket{a_2}\ket{b_2}\ket{2}.
\end{align}
Linear independence of the $\ket{a_i}$ means that $d_iW\ket{b_i}=\sqrt{c_i}\ket{a_i}\ket{b_i}\ket{i}$ and since the isometry preserves inner products, we must have that $\ip{b_1}{b_2}=0$.  From Eq. \eqref{Eq:unitaries}, this also implies that $\ip{\beta_1}{\beta_2}=0$.

We now repeat the same argument on the transformation of $\rho^-_S$.  Collectively we find that $U_d$ facilitates a transformation of product states taking the form
\begin{align}
U_d\ket{a_1b}&=\ket{+}\ket{\beta}&U_d\ket{\bar{a}_1\bar{b}}&=\ket{-}\ket{\bar{\beta}}\notag\\
U_d\ket{a_2b^\perp}&=\ket{+}\ket{\beta^\perp}&U_d\ket{\bar{a}_2\bar{b}^\perp}&=\ket{-}\ket{\bar{\beta}^\perp}.
\end{align}
Pairwise orthogonality further requires that either $\ket{\bar{a}_1\bar{b}}=\ket{a_1^\perp b}$ and $\ket{\bar{a}_2\bar{b}^\perp}=\ket{a_2^\perp b^\perp}$ or $\ket{\bar{a}_1\bar{b}}=\ket{a_2^\perp b^\perp}$ and $\ket{\bar{a}_2\bar{b}^\perp}=\ket{a^\perp_1b}$.  Either way, the transformation is of the form given in Eq. \eqref{Eq:ProdBasis}, and so by Lemma \ref{Lem:ProdBasis} we must have that $U_d\in\CUU$.  This is a contradiction.

\end{document}